\newcommand{\1}{{\tt b}}
\newcommand{\0}{{\tt a}}
\newcommand{\abs}[1]{|{#1}|}
\newcommand{\LL}{{\tt L}}
\newcommand{\ST}{{\tt S}}
\newcommand{\EE}{{\tt E}}
\newcommand{\cc}{{\tt c}}
\newcommand{\ff}{{\tt f}}
\newcommand{\DD}{{\tt D}}
\newcommand{\less}{\vartriangleleft}
\newtheorem{theorem}{Theorem}
\newtheorem{lemma}{Lemma}
\theoremstyle{remark}
\newtheorem{example}{Example}
\begin{document}
\title{Prefix frequency of lost positions}

\author{\v St\v ep\' an Holub}
\address[\v{S}. Holub]{Department of Algebra, Charles University, Sokolovsk\'a 83, 175 86 Praha, Czech Republic}
\email{holub@karlin.mff.cuni.cz}

\subjclass{68R15}
\keywords{runs, Lyndon words, periodicity}
\thanks{Supported by the Czech Science Foundation grant number 13-01832S}

\begin{abstract}
The concept of ``lost positions'' is a recently introduced tool for counting the number of runs in words.
We investigate the frequency of lost positions in prefixes of words. This leads to an algorithm that allows to show, using an extensive computer search, that the asymptotic density of runs in binary words is less than  $183/193\approx 0.9482$.   
\end{abstract}
\maketitle

\section{Periods, runs and Lyndon words}
In this paper we investigate the number of \emph{runs} in words over the binary alphabet $\Sigma=\{\0,\1\}$. A word $w$ here %(finite or infinite) 
is understood as a sequence $w[i]\in \Sigma$, $i=1,2\dots,n$, written $w=w[1]w[2]\cdots w[n]$. 
The integer $n$ is called the \emph{length} of $w$ and is denoted as $|w|$. We shall deal with \emph{factors} of a word $w$, which are words $w[i]w[i+1]\cdots w[j]$ for some $1\leq i \leq j\leq |w|$, abbreviated $w[i..j]$, with the convention $w[i..i]=w[i]$. We shall be also interested in integer intervals $[i..j]$ themselves, and also with individual integers $1,2,\dots,n$, which will be called \emph{positions} of $w$ in the appropriate context.  Words are naturally endowed with the operation of concatenation. If $u$ and $v$ are words, then their concatenation is written simply as $uv$.
 
We shall work with two basic properties of words, namely periods and lexicographic orders.
An integer $p$ is \emph{a period} of $w$ if $w[i+p]=w[i]$ for all $i=1,2,\dots,n-p$. \emph{The period} is the least period of the word. A word is \emph{primitive} if it is not a power of a shorter word. Every word $w$ is a power of a primitive word, which is known to be unique and is called the \emph{primitive root} of $w$. A basic result in combinatorics on words claims that two words $u$ and $v$ commute (that is, $uv=vu$) if and only if they have the same primitive root. Two words $w$ and $w'$ are called \emph{conjugate} if $w=uv$ and $w'=vu$ for some words $u$ and $v$.

The main object of research in this paper, a \emph{run} of $w$, is defined as an interval $[i..j]$, $1\leq i<j\leq |w|$, whose length $j-i+1$ is at least $2p$, where $p$ is the (least) period of $w[i..j]$, and such that the period $p$ cannot be extended, that is, $i$ is either $1$ or $w[i-1]\neq w[i-1+p]$, and $j$ is either $|w|$ or $w[j+1]\neq w[j+1-p]$. We remark that if $q$ is another period of $w[i..j]$ satisfying $j-i+1\geq 2q$, then $q$ is a multiple of $p$. This is a consequence of the Periodicity lemma (the discrete variant of Fine and Wilf theorems, see \cite{FW1965}), which claims that if a word $w$ of length at least $p+q-\gcd(p,q)$ has periods $p$ and $q$, then $\gcd(p,q)$ is a period of $w$ too.   

Interest in runs, also called ``maximal repetitions'', is part of a more general research on (long) repetitions which play an important role in the string processing, for example compression (see  \cite{crochemore09:_repet,smyth13, Crochemore_2014} for surveys). 
The maximal possible number of runs in a (binary) word of length $n$, denoted as $\rho(n)$, has been investigated a lot in recent years. 
Kolpakov and Kucherov  showed in \cite{KK} that $\rho(n) = O(n)$ which started the quest for the optimal constant, and also for other properties of the sequence $\rho(n)/n$. In a series of papers (\cite{rytter, puglisi, crochemore1, giraud, crochemore2}) the constant was progressively lowered towards $1$, often with a heavy computer computation. At first, runs were counted by their starting positions, later it turned out that using the center is more efficient. A breakthrough came with \cite{runsTheorem} where it became clear that decisively better choice is to consider the point of the run in which a particular Lyndon root starts. The consequence is a remarkably simple proof that $1$ is a strict upper bound on the constant. This result was expanded and accompanied by a new computer search in \cite{beyond}. The present paper develops the original idea a bit further and pushes the upper bound down by considering prefix density of \emph{lost positions}.    

The mentioned key concept of a Lyndon word is related to the second basic property of words, lexicographic orders. Let $\less$ be a lexicographic order on words. In our case of the binary alphabet, we have two orders $\less_c$, $c\in \Sigma$, defined by $\0\less_\0 \1$ and $\1\less_\1 \0$. A word $w$ is $\less$-Lyndon if for any factorization $w=uv$ (where $u$ and $v$ are not empty) we have $w\neq vu$ and $w\less vu$. The basic property of a Lyndon word is the following.
\begin{lemma}\label{border}
    Let $w=uvu$ be $\less$-Lyndon word. Then $u$ is empty.
\end{lemma}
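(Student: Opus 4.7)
My plan is to suppose $u \neq \epsilon$ and to derive a contradiction by extracting two incompatible lexicographic inequalities from the Lyndon property of $w = uvu$.

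I would apply the definition of Lyndon to two factorizations, both of which are non-trivial under the assumption $u \neq \epsilon$: the factorization $w = (uv) \cdot u$ has associated conjugate $u \cdot uv = uuv$, and the factorization $w = u \cdot (vu)$ has associated conjugate $(vu) \cdot u = vuu$. The definition then supplies the strict inequalities
\[
uvu \less uuv \qquad \text{and} \qquad uvu \less vuu.
\]

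Next, I would cancel the common prefix $u$ in the first inequality, which is a standard property of lexicographic order, obtaining $vu \less uv$. Because $vu$ and $uv$ have the same length $|u|+|v|$, this inequality is witnessed at some first differing position $i \le |u|+|v|$ that lies entirely within both $vuu$ and $uvu$; appending $u$ on the right therefore does not change the outcome of the comparison, so $vuu \less uvu$. This directly contradicts the second inequality displayed above, forcing $u = \epsilon$.

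The only step where some care is required is the passage $vu \less uv \Rightarrow vuu \less uvu$: it is essential that $|vu| = |uv|$, so that the differing position is internal to both words and extending on the right preserves the comparison. The degenerate case $v = \epsilon$ needs no separate treatment, since the inequality $uvu \less uuv$ then degenerates to $uu \less uu$, which is already impossible; everything else is bookkeeping directly from the definition.
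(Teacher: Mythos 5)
Your proof is correct and follows essentially the same route as the paper's: both derive $uvu \less uuv$ and $uvu \less vuu$ from the two factorizations, cancel the prefix $u$ to get $vu \less uv$, and append $u$ to obtain $vuu \less uvu$, contradicting the second inequality. You have merely made explicit the cancellation/extension steps (and the degenerate case $v=\epsilon$) that the paper leaves implicit.
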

\begin{proof}
The definition of Lyndon word yields $uvu\less uuv$ which implies $vu\less uv$, and $vuu\less uvu$. The claim follows from the definition of a Lyndon word.
\end{proof}
 The lemma says that a Lyndon word $w$ has no period shorter than $|w|$, in particular, it is primitive. On the other hand, any primitive word has a Lyndon conjugate: it is enough to take the $\less$-minimum of all conjugates. The relation to runs is given by the fact that if $[i..j]$ is a run, then $w[i..j]$ contains all conjugates of its period, or more precisely any conjugate of any factor of length $p$, where $p$ is the period of $w[i..j]$. If $w$ is a word, then a factor of $w$ which is $\less$-Lyndon is called the \emph{$\less$-Lyndon root} of $w$. Therefore, if $[i..j]$ is a run, then $w[i..j]$ contains each of the two Lyndon roots at least once. We give two more technical properties of Lyndon words for future reference.

\begin{lemma}\label{broken2}
    Let $z$ be $\less_\0$-Lyndon word, let $z' \1$ be a prefix of $z$, let $k\geq 1$ and let $u$ be an arbitrary word. 
    Then $z$ is the longest Lyndon word which is a prefix of  $z^kz'  \0u$. 
\end{lemma}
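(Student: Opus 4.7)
The word $z$ is certainly a prefix of $z^kz'\0u$ since $k\geq 1$, and it is Lyndon by hypothesis. So the content is to rule out any strictly longer Lyndon prefix. My plan is to assume for contradiction that $y$ is a $\less_\0$-Lyndon prefix of $z^kz'\0u$ with $|y|>|z|$, and exhibit a nonempty proper suffix $s$ of $y$ with $s\less_\0 y$ or $s=y$, contradicting the Lyndon property.

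I would split according to whether $y$ reaches past the distinguished $\0$ or not. Write $N=k|z|+|z'|$, so that position $N+1$ of $z^kz'\0u$ is the letter $\0$ produced by the factorization.

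\emph{Case 1: $|z|<|y|\le N$.} Then $y$ is a prefix of $z^kz'$, which is itself a prefix of $z^{k+1}$ because $z'$ is a prefix of $z$. So I can write $y=z^jv$ with $j\geq 1$ and $v$ a proper prefix of $z$ (possibly empty). If $v$ is empty then $j\geq 2$ and $y=z^j$ fails to be primitive, hence is not Lyndon by Lemma \ref{border}. If $v$ is nonempty, then $v$ is a suffix of $y$, and $v$ is also a proper prefix of $z$, hence a proper prefix of $y$; thus $v\less_\0 y$, again contradicting the Lyndon property.

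\emph{Case 2: $|y|>N$.} Then $y=z^kz'\0x$ for some prefix $x$ of $u$. Consider the proper suffix $s=z'\0x$ of $y$. The word $y$ begins with $z=z'\1z''$, so positions $1,\dots,|z'|$ of both $y$ and $s$ agree (they form $z'$), while position $|z'|+1$ of $y$ is $\1$ and position $|z'|+1$ of $s$ is $\0$. Since $\0\less_\0\1$, this gives $s\less_\0 y$, contradicting the Lyndon property of $y$.

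The only step that requires any thought is being sure that the cases exhaust the possibilities and that the suffix used really is a proper suffix; the comparison itself is immediate once the correct suffix is chosen. The key insight driving the argument is that the letter $\1$ in the hypothesis $z'\1\pref z$ is what forces any extension through the replaced letter $\0$ to drop below $z$ lexicographically, while periodic extensions that stay inside $z^kz'$ are blocked simply because Lyndon words are unbordered.
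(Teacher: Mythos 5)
Your proof is correct and takes essentially the same two-case route as the paper: prefixes of $z^kz'$ are excluded because they are bordered (Lemma \ref{border}), and longer prefixes are excluded by the mismatch $\0$ versus $\1$ at position $|z'|+1$. The only cosmetic difference is that you compare $y$ with its suffix $z'\0x$, whereas the paper, consistent with its conjugate-based definition of Lyndon words, compares $y$ with the conjugate $z'\0x\,z^k$; since the mismatch occurs within the suffix itself, the two comparisons are interchangeable.
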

\begin{proof}
Let $w$ be a prefix of $z^kz'  \0u$ longer than $z$. If $w$ is a prefix of $z^kz'$, then $w$ has a period $z$ and therefore it is of the form $uvu$, hence not Lyndon by Lemma \ref{border}. On the other hand, if $w=z^kz'\0u'$, then $z'\0u'z^k \less_\0 w$ since $z'\1$ is a prefix of $w$. Therefore, again, $w$ is not a  Lyndon word.
\end{proof}

\begin{lemma}\label{broken}
    Let $z$ be  $\less_\0$-Lyndon word, let $z' \0$ be a prefix of $z$ and let $k\geq 1$. 
    Then $z^kz'  \1$ is  $\less_\0$-Lyndon word. 
\end{lemma}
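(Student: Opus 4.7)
The plan is to verify the defining property of $\less_\0$-Lyndon words directly: I will show that $w := z^k z'\1$ is strictly $\less_\0$-smaller than every non-empty proper suffix of itself. Set $n = \abs{z}$ and $m = \abs{z'}$, so that $z[1..m+1] = z'\0$ and in particular $z[m+1]=\0$. Note also that $z[1]=\0$ whenever $\abs{z}\geq 2$, since otherwise every rotation of a $\less_\0$-Lyndon word starting with $\1$ (and containing any $\0$) would beat it. I would then split the non-empty proper suffixes of $w$ into three families according to where they begin, and dispose of each in turn.

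The first family consists of suffixes that begin at a block boundary, namely $s = z^a z'\1$ for some $0 \leq a \leq k-1$. Both $w$ and $s$ share the initial $z^a$; after that $w$ enters a fresh copy of $z$ while $s$ enters $z'\1$, so the next $m$ letters agree (both spell $z'$), whereas at position $m+1$ of that tail $w$ shows $z[m+1]=\0$ while $s$ shows its terminal $\1$, yielding $w \less_\0 s$. The second family consists of suffixes starting in the interior of a $z$-block, that is, $s = z[\ell+1..n]\,z^{k-j-1} z'\1$ with $1 \leq \ell < n$. Since $z$ is $\less_\0$-Lyndon, $z \less_\0 z[\ell+1..n]$, and $z[\ell+1..n]$ is not a prefix of $z$ (a proper prefix is strictly $\less_\0$-smaller, contradicting the Lyndon property). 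Hence $z$ and $z[\ell+1..n]$ first differ at some position $q \leq n - \ell$ with $z[q]\less_\0 z[\ell+q]$, and this mismatch is inherited by $w$ and $s$.

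The main obstacle is the third family: the suffixes that start inside the trailing $z'\1$, which are $s = z[r..m]\,\1$ for $2 \leq r \leq m$ (the remaining case $s=\1$ is immediate from $z[1]=\0$). Here the terminal $\1$ of $s$ does not sit over a letter of $z$, so the Lyndon property of $z$ does not apply directly. I would let $q^*$ be the first position at which $z$ and $z[r..n]$ differ; as in the second family such a $q^*$ exists and satisfies $z[q^*]\less_\0 z[r+q^*-1]$. If $q^*\leq m-r+1$, this difference lies inside the overlap of $z$ with $z[r..m]$ and immediately gives $w\less_\0 s$. Otherwise $q^*\geq m-r+2$; but the value $q^*=m-r+2$ is forbidden, since it would require $z[q^*]\less_\0 z[m+1]=\0$. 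Hence $q^*\geq m-r+3$, which forces $z[1..m-r+2]=z[r..m+1]$ and therefore $z[m-r+2]=z[m+1]=\0$; at position $m-r+2$ the comparison of $w$ against $s$ then reads $\0$ versus $\1$, delivering $w\less_\0 s$. The crux is this last subcase, where one must exploit both the Lyndon inequality for $z[r..n]$ and the explicit fact that $z[m+1]=\0$ is the minimal letter.
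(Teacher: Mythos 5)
The paper does not actually prove this lemma: it is stated with the remark that the proof ``is a bit technical and we omit it'' and a pointer to the reference \cite{fully2016}. So there is no in-paper argument to compare against, and your proof stands as a self-contained replacement. Having checked it, I find it correct. You work with the characterization of a $\less_\0$-Lyndon word as a word strictly smaller than each of its proper non-empty suffixes, whereas the paper's definition is via rotations ($w\less vu$ and $w\neq vu$ for every factorization $w=uv$); these are classically equivalent, and the passage between them rests on unborderedness, i.e.\ exactly Lemma \ref{border}, which the paper does prove --- it would be worth one sentence to acknowledge this. Your three families exhaust the proper suffixes of $z^kz'\1$, and each case checks out: the block-boundary suffixes lose at the first position where a fresh copy of $z$ shows $z[m+1]=\0$ against the terminal $\1$; the mid-block suffixes inherit the strict mismatch $z[q]\less_\0 z[\ell+q]$ guaranteed by the Lyndon property of $z$ together with its unborderedness; and in the delicate third family you correctly isolate the two subcases, ruling out $q^*=m-r+2$ because nothing is $\less_\0$-below $\0$, and then winning at position $m-r+2$ where $w$ shows $z[m-r+2]=z[m+1]=\0$ against the terminal $\1$ of the suffix. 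This last step is indeed where the hypothesis that the prefix is $z'\0$ (rather than $z'\1$) enters, which is precisely what separates this lemma from Lemma \ref{broken2}. The argument is the natural Duval-style case analysis and is, if anything, more useful to the reader than the paper's bare citation.
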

The proof of Lemma \ref{broken} is a bit technical and we omit it (see \cite{fully2016}).  
\medskip

Parallel to the research on upper bounds, also several constructions of words with high density $\rho(n)/n$ of runs have appeared (\cite{franek,matsubara,jamie}), establishing lower bound for the sequence. By the end of this paper, we shall briefly compare our findings with these lower bounds.   
\medskip

The paper has two main parts. In Section \ref{theory}, we explain the theory that is behind the algorithm used in the computation. We also tackle the question of convergence of the sequence $\rho(n)/n$. Section \ref{sec3} yields some information about the computer computation and reports its results, which show that $\lim_{n\to \infty}\rho(n)/n<183/193$.
 
\section{Theory}\label{theory}

\subsection{Lost positions}
In this section we explain the classification of positions according to their relation to runs. This is the fundamental tool introduced in \cite{runsTheorem} and developed further in \cite{beyond}.

If $c\in \Sigma$, we shall denote as $\overline c$ the other letter from $\Sigma$, distinct from $c$. 
Let $1<i\leq |w|$ be a position in  a word $w$. We assign to $i$ the following parameters. 
\begin{align*}
\cc_w(i)&=\overline{w[i-1]}, \\
\LL_w(i)&=\max\{j\ |\ \text{$w[i..j]$ is  $\less_{\cc_w(i)}$-Lyndon word}\}, \\
\DD_w(i)&=\LL_w(i)-i+1, \\
\ST_w(i)&=\min\{j \ |\ \text{$\DD_w(i)$ is a period of $w[j..\LL_w(i)]$}\}, \\
\EE_w(i)&=\max\{j \ |\ \text{$\DD_w(i)$ is a period of $w[i..j]$}\}.
\end{align*}
By definition, the interval $[\ST_w(i)..\EE_w(i)]$  is the maximal extension of the interval $[i..\LL_w(i)]$ that has a period $\DD_w(i)$. The word $w[i..\LL_w(i)]$ is a Lyndon root of $w[\ST_w(i)..\EE_w(i)]$. 

\begin{example}\label{ex}
Parameters are illustrated by Table \ref{example} for the word \[w=\0\0\1\0\1\1\1\0\1\0\0\1\0\1\0\0\0.\] Take for example the position $10$. Since $w[10]=\0$ and $w[9]=\1$, we are looking for the longest Lyndon word with respect to $\less_\0$ starting at the position $10$. 
Such is the word $w[10..14]=\0\0\1\0\1$. It can be extended to both sides to $w[7..16]=\1\0\1\vert\0\0\1\0\1\vert\0\0$. The interval $[7..16]$ is a run in $w$ and $\0\0\1\0\1$ is the $\less_\0$-Lyndon root of $w[7..16]$. 

The $\less_\1$-Lyndon root of $w[7..16]$ is $\1\0\1\0\0$ and it occurs twice in $w[7..16]$, namely at positions $7$ and $12$. However, this particular Lyndon word is not considered in parameters of either $7$ or $12$. Since $w[6]=\1$, the position $7$ asks for $\less_\0$-Lyndon word. Largest such a word starting at the position $7$ is the single letter $\1$. Note that this is the case for all positions where $w[i]=w[i-1]$. More precisely $w[i]=w[i-1]$ implies $i=\LL_w(i)$.

As for the position $12$, here $\1\0\1\0\0$ is a Lyndon word with respect to the required order, but it is not the longest one starting at $12$. Extension to the right by one position yields a longer Lyndon word $\1\0\1\0\0\0$. It is useful to see the fact that $\1\0\1\0\0\0$ is a Lyndon word as a consequence of Lemma \ref{broken} (with the role of the letters interchanged). Informally, if the ``Lyndon period'' is ``broken'' by the lexicographically greater letter, then we obtain a new Lyndon word.  
\end{example}

\begin{table}[!htb]
  \caption[]{Parameters of positions in the word $\0\0\1\0\1\1\1\0\1\0\0\1\0\1\0\0\0$ }
  \label{example}
\begin{tikzpicture}[scale=0.5,inner sep=1pt]
\def\slovo{\0/1,\0/2,\1/3,\0/4,\1/5,\1/6,\1/7,\0/8,\1/9,\0/10,\0/11,\1/12,\0/13,\1/14,\0/15,\0/16,\0/17}
\draw[help lines] (-1,0) grid (17,-6);
\node at (-0.5,-1.5) {\tiny $i$};
\node at (-0.5,-2.5) {$\LL$}; 
\node at (-0.5,-3.5) {$\DD$}; 
\node at (-0.5,-4.5) {$\ST$};
\node at (-0.5,-5.5) {$\EE$};  
\foreach \p in {-2,-3,-4,-5}
\node at (0.5,\p-0.5) {$\times$}; 
\foreach \l/\p in \slovo
\node at (\p-0.5,-0.5) {$\l$}; 
\foreach \p in {1,2,...,17}
\node at (\p-0.5,-1.5) {\tiny \p}; 
%L
\foreach \p/\l in 
{2/2,3/4,4/7,5/17,6/6,7/7,8/9,9/11,10/14,11/11,12/17,13/14,14/17,15/15,16/16,17/17}
\node at (\p-0.5,-2.5) {\small \l}; 
%D
\foreach \p/\l in 
{2/1,3/2,4/4,5/13,6/1,7/1,8/2,9/3,10/5,11/1,12/6,13/2,14/4,15/1,16/1,17/1}
\node at (\p-0.5,-3.5) {\small \l}; 
%S
\foreach \p/\l in 
{2/1,3/2,4/3,5/4,6/5,7/5,8/7,9/8,10/7,11/10,12/10,13/11,14/13,15/15,16/15,17/15}
\node at (\p-0.5,-4.5) {\small \l}; 
%E
\foreach \p/\l in 
{2/2,3/5,4/9,5/17,6/7,7/7,8/10,9/13,10/16,11/11,12/17,13/15,14/17,15/17,16/17,17/17}
\node at (\p-0.5,-5.5) {\small \l}; 
\end{tikzpicture}
\end{table}

Consider now the mapping
\[\+R_w: i \mapsto [\ST_w(i)..\EE_w(i)]\,, \]
defined for $i=2,3,\dots,|w|$. 
Let $[s..e]$ be in the range of $\+R_w$, and let us investigate positions $i$ for which $\+R_w(i)=r$. Candidates are the starting positions of Lyndon roots of $[s..e]$. These form two arithmetic progressions with the common difference $p$, where $p$ is the period of $w[s..e]$. For example, the word $\0\1\1\0\0\1\1\0\0\1\1\0\0\1$ has the period 4, and Lyndon roots $\0\0\1\1$ and $\1\1\0\0$ with sequences of starting positions $(4,8)$ and $(2,6,10)$ respectively (see Table \ref{progres}). 

\begin{table}%
\caption{Arithmetic progressions of Lyndon roots}
\label{progres}
\begin{tikzpicture}
	[scale=0.5,inner sep=1pt]
\def\slovo{\0/1,\1/2,\1/3,\0/4,\0/5,\1/6,\1/7,\0/8,\0/9,\1/10,\1/11,\0/12,\0/13,\1/14}
\draw[help lines] (0,0) grid (14,1);
\foreach \x in {2,6,10}
\draw node[fill=white] at (\x-0.5,1) {{\tiny \x}};
\foreach \x in {4,8}
\draw node[fill=white] at (\x-0.5,0) {{\tiny \x}};
\foreach \l/\p in \slovo
\node at (\p-0.5,0.5) {$\l$};
\foreach \p in {0,1,2}
\draw[gray] (1+4*\p,1) .. controls (2+4*\p,2) and (4+4*\p,2) .. (5+4*\p,1); 
\foreach \p in {0,1}
\draw[gray] (3+4*\p,0) .. controls (4+4*\p,-1) and (6+4*\p,-1) .. (7+4*\p,0); 
\end{tikzpicture}
\end{table}

However, there are two requirements on these candidates: a) the word starting at $i$ must be $\less_{\cc_w(i)}$-Lyndon; and b) it must be the longest Lyndon word starting at $i$. 

We show that the condition a) is satisfied whenever $i=\LL_w(i)$ or $s<i$. In the case $i=\LL_w(i)$, the single letter $w[i]=w[i..\LL_w(i)]$ is trivially a Lyndon word for both orders.
 If $i<\LL_w(i)$ and $s<i$, then the Lyndon word $w[i..\LL_w(i)]$ contains both letters, hence it is $\less_{w[i]}$-Lyndon. The period of $w[s..e]$ implies $w[i-1]=w[\LL_w(i)]$, and Lemma \ref{border} yields $w[i]\neq w[\LL_w(i)]$. Therefore  $w[i]=\cc_w(i)$.

The condition b) is satisfied if $e=|w|$, by Lemma \ref{border} (a Lyndon word has no period shorter than itself). If $e<|w|$, then Lemma \ref{broken2} and Lemma \ref{broken} imply that b) is satisfied if and only if $w[e+1]=\cc_w(i)$. 

Altogether, we have obtained the following characterization of preimages of $r=[s..e]$: $\+R(i)=r$ if and only if either
 \begin{enumerate}[(A)]
	\item $e<|w|$ and $s<i$, and $i$ is the starting position of the $\less_{w[e+1]}$-Lyndon root of $w[s..e]$; or \label{caseA}
	\item $e=|w|$ and $s<i$, and $i$ is the starting position of  any Lyndon root of $w[s..e]$; or \label{caseB}
	\item $r=[i..|w|]$ and $w[r]$ has the period one. \label{caseC}
\end{enumerate}
Note that the case \eqref{caseC} applies to a single position, namely to the starting position of the last ``block'' of letters in $w$.

It is now straightforward to observe that each run of $w$ is in the range of $\+R_w$. This is the groundbreaking observation of \cite{runsTheorem} which immediately proves that $\rho(n)<n$ (the inequality is strict since the first position is not mapped). On the other hand, $\+R_w$ need not be injective, and $\+R_w(i)$ may not be a run. These two options constitute a basis for further lowering of $\rho(n)$.

In Table \ref{example},  positions $6$ and $7$ are an example of non-injectivity. And the position $4$ is mapped to $[3..9]$ of length seven which is not a run since the least period of $w[3..9]$ is four. This alone implies that the word in the table has at most $|w|-3$ runs. We are going to say that positions $6$ and $4$ are lost.

Given a run $r=[s..e]$, we pick a representative $\+C_w(r)$ of positions mapped to $r$ as follows. If $e<|w|$ %or if $r$ is the last block of letters (formally, $e=|w|$ and the period of $r$ is one)
, then $\+C_w(r)$ is the maximum of all positions $i$ such that $\+R_w(i)=r$.

If $e=|w|$, then there are two progressions one for each lexicographic order (they coincide if the period of $[s..e]$ is one). In this case we pick the last member of the progression that starts later. 
More formally, let 
\begin{align*}
r_w(c)=\min\{i\ |\ \text{$\+R_w(i)=r$ and $w[i..\LL_w(i)]$ is $\less_c$-Lyndon}\},
\end{align*}
be the starting elements of the two progressions. We define $\ff_w(r)$ to be such that $r_w({\ff_w(r)})\geq r_w({\overline{\ff_w(r)}})$ (we choose $\ff_w(r)=\0$ if the period of $[s..e]$ is one). 
Then
\begin{align*}
\+C_w(r)=
\max\{i\ |\ \text{$\+R_w(i)=r$ and $w[i..\LL_w(i)]$ is $\less_{\ff_w(r)}$-Lyndon}\}\,.
\end{align*}
Using the the word in Table \ref{progres}, we have $\ff_w([1,14])=\0$ and $\+C_w([1,14])=8$.
%Note also that if the period of $r$ is one, the two sequences coincide. We adopt the convention $\ff_w(r)=\0$ for such a case.

We now say that the position $1<i\leq |w|$ in $w$ is 
\begin{enumerate}[(a)]
	\item \emph{charged} if $i=\+C_w(r)$ for some run $r$;
	\item \emph{right open} if $\EE_w(i)=|w|$, and
		\begin{enumerate}[(ba)]
		\item either $\+R_w(i)$ is not a run, or \label{ba}
		\item $\+R_w(i)$ is a run and $w[i..\LL_w(i)]$ is not $\less_{\ff_w(r)}$-Lyndon, or \label{bb}
		\item $i=\ST_w(i)$; \label{bc}
		\end{enumerate}
	\item \emph{left open} if  $\ST_w(i)=1$ and $\+R_w(i)$ is not a run; \label{leftopen}	
	\item \emph{lost} otherwise.	
\end{enumerate}
Since we will be most interested in lost positions, let us give their positive characterization. The position $i$ is lost in $w$ if and only if
\begin{enumerate}[(i)]
	\item $\EE_w(i)<|w|$ and $\ST_w(i)>1$ and $\+R_w(i)$ is not a run, or \label{lost1}
	\item $\EE_w(i)<|w|$, $\+R_w(i)$ is a run, and $i+\DD_w(i)\leq \EE_w(i)$, or \label{lost2}
	\item $\EE_w(i)=|w|$, $\+R_w(i)$ is a run $r$, $i+\DD_w(i)\leq \EE_w(i)$, $\ST_w(i)<i$ and $w[i..\LL_w(i)]$ is $\less_{\ff_w(r)}$-Lyndon.\label{lost3}
\end{enumerate}

Conspicuous complications when it comes to positions with $\EE_w(i)=|w|$ (or, to a lesser extent, $\ST_w(i)=1$) are motivated by our desire to detect as many lost positions as possible and in the same time to make the definitions compatible with extensions of the word $w$ to the left and right. All previously defined parameters of $i$ remain unchanged when we start to consider a word $w_1ww_2$ instead of $w$ when $\ST_w(i)>1$ and $\EE_w(i)< |w|$. If $\EE_w(i)=|w|$, then the parameters may change as Table \ref{example2} illustrates for the word from Table \ref{example} appended with $\0$ or with $\1$. If   $\EE_w(i)< |w|$ and $\ST_w(i)=1$, then only $\ST_{w_1ww_2}(i)$ can be different.

\begin{table}[!htb]
  \caption[]{Changed parameters}
  \label{example2}
\begin{center}
\begin{tikzpicture}[scale=0.5,inner sep=1pt]
\def\slovo{\0/1,\0/2,\1/3,\0/4,\1/5,\1/6,\1/7,\0/8,\1/9,\0/10,\0/11,\1/12,\0/13,\1/14,\0/15,\0/16,\0/17,\1/18}
\draw[help lines] (-1,0) grid (18,-6);
\foreach \x/\y in {5/4,12/4,14/4,15/1,15/3,15/4}
\node[fill=black!20, inner sep=3.5pt, style=circle] at (\x-0.5,-1-\y-0.5) {}; 
\draw [fill=black!10,black!10] (17,0) rectangle (18,-6);
\node at (-0.5,-1.5) {\tiny $i$};
\node at (-0.5,-2.5) {$\LL$}; 
\node at (-0.5,-3.5) {$\DD$}; 
\node at (-0.5,-4.5) {$\ST$};
\node at (-0.5,-5.5) {$\EE$};  
\foreach \p in {-2,-3,-4,-5}
\node at (0.5,\p-0.5) {$\times$}; 
\foreach \l/\p in \slovo
\node at (\p-0.5,-0.5) {$\l$}; 
\foreach \p in {1,2,...,18}
\node at (\p-0.5,-1.5) {\tiny \p}; 
%L
\foreach \p/\l in 
 {2/2,3/4,4/7,5/17,6/6,7/7,8/9,9/11,10/14,11/11,12/17,13/14,14/17,15/18,16/16,17/17,18/18}
\node at (\p-0.5,-2.5) {\small \l}; 
%D
\foreach \p/\l in 
 {2/1,3/2,4/4,5/13,6/1,7/1,8/2,9/3,10/5,11/1,12/6,13/2,14/4,15/4,16/1,17/1,18/1}
\node at (\p-0.5,-3.5) {\small \l}; 
%S
\foreach \p/\l in 
{2/1,3/2,4/3,5/4,6/5,7/5,8/7,9/8,10/7,11/10,12/10,13/11,14/13,15/13,16/15,17/15,18/18}
\node at (\p-0.5,-4.5) {\small \l}; 
%E
\foreach \p/\l in 
 {2/2,3/5,4/9,5/18,6/7,7/7,8/10,9/13,10/16,11/11,12/18,13/15,14/18,15/18,16/17,17/17,18/18}
\node at (\p-0.5,-5.5) {\small \l}; 
\node at (0,-6.5) {};
\end{tikzpicture}

\begin{tikzpicture}[scale=0.5,inner sep=1pt]
\def\slovo{\0/1,\0/2,\1/3,\0/4,\1/5,\1/6,\1/7,\0/8,\1/9,\0/10,\0/11,\1/12,\0/13,\1/14,\0/15,\0/16,\0/17,\0/18}
\draw[help lines] (-1,0) grid (18,-6);
\draw [fill=black!10,black!10] (17,0) rectangle (18,-6);
\foreach \x/\y in {5/1,5/2,5/4,12/1,12/2,12/4,14/1,14/2,14/4,15/4,16/4,17/4}
\node[fill=black!20, inner sep=3.5pt, style=circle] at (\x-0.5,-1-\y-0.5) {}; 
\node at (-0.5,-1.5) {\tiny $i$};
\node at (-0.5,-2.5) {$\LL$}; 
\node at (-0.5,-3.5) {$\DD$}; 
\node at (-0.5,-4.5) {$\ST$};
\node at (-0.5,-5.5) {$\EE$};  
\foreach \p in {-2,-3,-4,-5}
\node at (0.5,\p-0.5) {$\times$}; 
\foreach \l/\p in \slovo
\node at (\p-0.5,-0.5) {$\l$}; 
\foreach \p in {1,2,...,18}
\node at (\p-0.5,-1.5) {\tiny \p}; 
%L
\foreach \p/\l in 
 {2/2,3/4,4/7,5/18,6/6,7/7,8/9,9/11,10/14,11/11,12/18,13/14,14/18,15/15,16/16,17/17,18/18}
\node at (\p-0.5,-2.5) {\small \l}; 
%D
\foreach \p/\l in 
 {2/1,3/2,4/4,5/14,6/1,7/1,8/2,9/3,10/5,11/1,12/7,13/2,14/5,15/1,16/1,17/1,18/1}
\node at (\p-0.5,-3.5) {\small \l}; 
%S
\foreach \p/\l in 
{2/1,3/2,4/3,5/4,6/5,7/5,8/7,9/8,10/7,11/10,12/10,13/11,14/13,15/15,16/15,17/15,18/15}
\node at (\p-0.5,-4.5) {\small \l}; 
%E
\foreach \p/\l in 
{2/2,3/5,4/9,5/18,6/7,7/7,8/10,9/13,10/16,11/11,12/18,13/15,14/18,15/18,16/18,17/18,18/18}
\node at (\p-0.5,-5.5) {\small \l}; 
\end{tikzpicture}
\end{center}
\end{table}

We mainly care about differences regarding lost positions. We would like to claim that lost positions remain uncharged in any extension of $w$ (that is why we call them lost, they will never be charged). This is true for lost positions with $\EE_w(i)<|w|$ but not always for positions lost according to \eqref{lost3}. Why we call them lost, then? We explain the idea using the word $w=\0\1\1\0\0\1\1\0\0\1\1\0\0\1$ from Table \ref{progres}.
The run $[1..14]$ is fairly overloaded, it is the image of positions $\{2,4,6,8,10\}$ under $\+R_w$. We argue that at least one of those positions must be lost in any extension of $w$. More precisely, we argue that (at least) one position of the pair $\{2,4\}$ is lost in any extension but it is not clear which one. This depends on whether the period of $w[1..14]$ will be ``broken'' by $\0$ or by $\1$. The two possibilities are illustrated in Table \ref{extend}.
If $w_2=\1\0\1\0$ (and $w_1$ is empty), then we obtain
\begin{align*}
 \+R_{ww_2}(2)&=\+R_{ww_2}(6)=\+R_{ww_2}(10)=[1..16], \\
 \+R_{ww_2}(4)&=[3..18], \\
 \+R_{ww_2}(8)&=[7..18], \\
  \LL_{ww_2}(4)&=\LL_{ww_2}(8)=17,
\end{align*}
and $2$ (as well as $6$) is lost by \eqref{lost2}.
On the other hand, for $w_2=\0\1$, we have
\begin{align*}
 \+R_{ww_2}(4)&=\+R_{ww_2}(8)=[1..14], \\
 \+R_{ww_2}(2)&=[1..16], \\
 \+R_{ww_2}(6)&=[5..16], \\
\+R_{ww_2}(10)&=[9..16], \\
\LL_{ww_2}(2)&=\LL_{ww_2}(6)=\LL_{ww_2}(10)=15,
\end{align*}
and $4$ is lost by \eqref{lost2}.

\begin{table}%
\caption{Lost positions after extension}
\label{extend}
\begin{tikzpicture}
	[scale=0.5,inner sep=1pt]
\def\slovo{\0/1,\1/2,\1/3,\0/4,\0/5,\1/6,\1/7,\0/8,\0/9,\1/10,\1/11,\0/12,\0/13,\1/14,\1/15,\0/16,\1/17,\0/18}
\draw [fill=black!10,black!10] (0,0) rectangle (14,1);
\node[fill=black!20, inner sep=3.5pt, style=circle] at (16.5,0.5) {};
\draw[help lines] (0,0) grid (18,1);
\foreach \x in {2,6,10,14,16}
\draw node[fill=white] (n\x) at (\x-0.5,1) {{\tiny \x}};
\foreach \x in {4,8,18}
\draw node[fill=white] (n\x) at (\x-0.5,0) {{\tiny \x}};
\foreach \l/\p in \slovo
\node at (\p-0.5,0.5) {$\l$};
\begin{scope}
\clip (0,1) rectangle (18,2);
\foreach \p in {-1,0,1,2}
\draw[gray] (1+4*\p,1) .. controls (2+4*\p,2) and (4+4*\p,2) .. (5+4*\p,1); 
\end{scope}
%
%\foreach \p in {0,1}
\draw[gray] (3,0) .. controls (4,-3) and (16,-3).. (17,0); 
\draw[gray] (7,0) .. controls (8,-2) and (16,-2).. (17,0); 
\begin{scope}
\clip (17,0) rectangle (18,-1);
\draw[gray] (17,0) .. controls (18,-1.5) and (30,-1.5) .. (31,0); 
\end{scope}
\begin{scope}
\clip (6,0) rectangle (7,-1);
\draw[gray] (7,0) .. controls (6,-2) and (-2,-2) .. (-3,0); 
\end{scope}
\begin{scope}
\clip (3,0) rectangle (2,-1);
\draw[gray] (3,0) .. controls (2,-2) and (-13,-1.5) .. (-14,0); 
\end{scope}
\begin{scope}
\clip (13,1) rectangle (16,2);
\draw[gray] (13,1) .. controls (14,2) and (16,2) .. (17,1); 
\end{scope}
\begin{scope}
\clip (0,0) rectangle (16,-1);
\foreach \p in {-1,0,1,2,3}
\draw[dotted] (3+4*\p,0) .. controls (4+4*\p,-0.7) and (6+4*\p,-0.7) .. (7+4*\p,0); 
\end{scope}
\end{tikzpicture}
%######################
%######################
\begin{tikzpicture}
	[scale=0.5,inner sep=1pt]
\def\slovo{\0/1,\1/2,\1/3,\0/4,\0/5,\1/6,\1/7,\0/8,\0/9,\1/10,\1/11,\0/12,\0/13,\1/14,\0/15,\1/16}
\draw [fill=black!10,black!10] (0,0) rectangle (14,1);
\node[fill=black!20, inner sep=3.5pt, style=circle] at (14.5,0.5) {};
\draw[help lines] (0,0) grid (16,1);
\foreach \x in {2,6,10,14}
\draw node[fill=white] (n\x) at (\x-0.5,1) {{\tiny \x}};
\foreach \x in {4,8,12}
\draw node[fill=white] (n\x) at (\x-0.5,0) {{\tiny \x}};
\foreach \l/\p in \slovo
\node at (\p-0.5,0.5) {$\l$};
\begin{scope}
\clip (0,0) rectangle (14,-2);
\foreach \p in {-1,0,1,2}
\draw[gray] (3+4*\p,0) .. controls (4+4*\p,-1) and (6+4*\p,-1) .. (7+4*\p,0); 
\end{scope}
\foreach \p/\h in {1/3.5,5/2.5,9/2}
{
\begin{scope}
\clip (\p-1,1) rectangle (16,3);
\draw[gray] (\p,1) .. controls (\p+1,\h) and (14,\h) .. (15,1);
\draw[gray] (\p,1) .. controls (\p-1,\h) and (2*\p-14,\h) .. (2*\p-15,1);
\end{scope}
}
\begin{scope}
\clip (15,1) rectangle (16,3);
\draw[gray] (15,1) .. controls (16,3) and (22,3) .. (23,1);
\end{scope}
\begin{scope}
\clip (0,1) rectangle (14,3);
\foreach \p in {-3,1,5,9,13}
\draw[gray,dotted] (\p,1) .. controls (\p+1,1.5) and (\p+3,1.5) .. (\p+4,1);
\end{scope}
\end{tikzpicture}
\end{table}

We now formulate the correct version of the informal ``lost is lost forever'' precisely. 
\begin{lemma}\label{lem}
Let $w$, $w_1$ and $w_2$ be words, and let $p_1< p_2< \dots < p_k$ be all lost positions of a word $w$. Then $1<p_1$ and $p_k<|w|$, and there is a monotonically increasing injective mapping 
\[\mu: \{p_1, p_2,\dots,p_k\} \to \{1,2,\dots, |w|-1\} \]
such that, for all $i=1,2,\dots,k$, the position $|w_1|+\mu(p_i)$ is lost in $w_1ww_2$ and $\mu(p_i)\leq p_i$.
\end{lemma}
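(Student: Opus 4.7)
The plan is to construct, for each lost position $p$ of $w$, a distinct lost position $q(p)$ of $w_1ww_2$ satisfying $q(p)\le|w_1|+p$; from such an injection the required monotone $\mu$ is recovered by sorting, since for every $l$ the $l$-th smallest of the numbers $q(p_j)-|w_1|$ is bounded by $p_l$ (the $l$ candidates $q(p_1)-|w_1|,\dots,q(p_l)-|w_1|$ are all $\le p_l$). The boundary inequalities are quick: $1<p_1$ is built into the convention that lost positions live in $\{2,\dots,|w|\}$, and $p_k<|w|$ by direct inspection of the last position---if $w[|w|]\neq w[|w|-1]$ then $\ST_w(|w|)=|w|$ triggers case (bc) and makes $|w|$ right open, while if $w[|w|]=w[|w|-1]$ then $|w|$ is the right endpoint of a period-one run and equals $\+C_w$ of it, hence is charged.

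For a lost position $i$ with $\ST_w(i)>1$ and $\EE_w(i)<|w|$, the defining comparisons for $\LL,\DD,\ST,\EE$ and for the run/not-run test of $\+R_w(i)$ all localize strictly inside the factor $w[\ST_w(i)-1\,..\,\EE_w(i)+1]$, and the same holds for every other preimage of $\+R_w(i)$ under $\+R$. All parameters therefore transfer unchanged to $w_1ww_2$ up to the shift by $|w_1|$, so $|w_1|+i$ remains uncharged and not open, and I take $q(i)=|w_1|+i$. For a lost position with $\ST_w(i)=1$ and $\EE_w(i)<|w|$, necessarily of type (ii), only the left endpoint of the run may slide further left in $w_1ww_2$; the extended interval is still a run (same period, same right endpoint, length still $\ge 2p$), the inequality $i+\DD_w(i)\le\EE_w(i)$ is unaffected, and $|w_1|+\+C_w(r)$ still sits to the right of $|w_1|+i$ in the (possibly longer) preimage progression, so once more $q(i)=|w_1|+i$ is lost.

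The substantial case is $\EE_w(i)=|w|$ (type (iii)). I would bundle these lost positions by the run $r=[s..|w|]$ that carries them: if the chosen progression with respect to $\ff_w(r)$ is $i_1<\dots<i_m$ with $\+C_w(r)=i_m$, then the lost members are exactly $i_1,\dots,i_{m-1}$. In $w_1ww_2$ the extended run $\tilde r=[\tilde s..\tilde e]$ still has period $p$, and its two Lyndon-root progressions are rigid translates by $|w_1|$ of the two progressions in $r$, possibly stretched by a few multiples of $p$ on either side. Whichever one becomes chosen in $w_1ww_2$---determined by $\ff_{w_1ww_2}(\tilde r)$ when $\tilde e=|w_1ww_2|$ (so case (iii) applies), or by the breaking letter $w_1ww_2[\tilde e+1]$ when $\tilde e<|w_1ww_2|$ (so case (ii) applies)---I would verify that it has at least $m-1$ non-maximal elements, that the $l$-th of them (from the left) sits at an index at most $|w_1|+i_l$, and that each such element is in fact lost in $w_1ww_2$. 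Assigning $q(i_l)$ to the $l$-th such element completes the bundle, and distinctness across bundles is automatic because $\+R_w$ is a function and the positions grouped under different runs are disjoint.

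The main obstacle is exactly the bookkeeping in the type (iii) case: the chosen progression in $w_1ww_2$ may correspond to the opposite lexicographic order from $\ff_w(r)$, and the two progressions may each gain or lose members on either side of $\tilde r$. The decisive geometric fact is that the two Lyndon-root progressions are arithmetic with common difference $p$ and a mutual offset strictly less than $p$, so the $l$-th element of whichever progression is chosen in $\tilde r$ lies within distance $p$ of the $l$-th element of the original chosen progression in $r$; the extension of $\tilde r$ (either into $w_1$ or into $w_2$, depending on the sub-case) always provides exactly the slack needed to keep the $l$-th chosen candidate no further right than $|w_1|+i_l$, which is what the bound on $\mu$ demands.
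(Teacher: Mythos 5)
Your overall strategy coincides with the paper's: keep $q(i)=|w_1|+i$ for positions whose parameters are stable, and for the boundary case $\EE_w(i)=|w|$ shift each lost member of the chosen progression onto the corresponding member of the \emph{other} Lyndon-root progression, which starts earlier because $\ff_w(r)$ was defined as the later-starting order; this is exactly the paper's $\mu(p_i)=p_i-\Delta$ with $\Delta=r_w(\ff_w(r))-r_w(\overline{\ff_w(r)})>0$. Your treatment of the two cases with $\EE_w(i)<|w|$ and of the boundary claims $1<p_1$, $p_k<|w|$ is correct, and your observation that the images in the hard case land on positions that were \emph{open} in $w$ is the right reason why injectivity survives.

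However, the decisive step is announced rather than proved: ``I would verify that \dots each such element is in fact lost in $w_1ww_2$'' is precisely the content of the lemma in its only nontrivial case, and the heuristic in your last paragraph (offsets less than $p$, ``the extension always provides exactly the slack needed'') addresses only the \emph{location} of the candidates, not their \emph{lostness}. To establish lostness you must track how $\LL$, $\DD$ and $\EE$ change for the positions of each progression when the period of $w[s..|w|]$ is eventually broken in $w_2$, and this splits into two genuinely different situations that your sketch does not separate: (a) the period is broken by the letter $\overline{\ff_w(r)}$, so that by Lemma \ref{broken} the Lyndon roots of the old chosen progression \emph{lengthen} (their $\DD$ grows and they migrate to a different run), while by Lemma \ref{broken2} the roots of the other progression keep their length and acquire $\EE_{w'}<|w'|$, making them lost via condition \eqref{lost2}; and (b) the period is never broken ($\EE_{w'}=|w'|$) but $\ff$ flips, in which case the other progression becomes the chosen one and its members are lost via condition \eqref{lost3}. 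Without invoking Lemmas \ref{broken2} and \ref{broken} (or an equivalent analysis of how the longest Lyndon prefix responds to the breaking letter), the claim that the shifted candidates are lost is unsupported, so the proof is incomplete at its most delicate point. A smaller inaccuracy: the lost members of the chosen progression $i_1<\dots<i_m$ are not always ``exactly $i_1,\dots,i_{m-1}$''; the element $i_1$ may satisfy $i_1=\ST_w(i_1)$ and then it is right open by \eqref{bc} rather than lost. This only shrinks the set you must handle, so it is harmless, but it should be stated correctly.
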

\begin{proof}
It is straightforward to see from definitions that  the first and the last positions of a word are not lost, therefore $1<p_1$ and $p_k<|w|$.

Denote $w'=w_1ww_2$, $p_i'=|w_1|+p_i$, $r=\+R_w(p_i)$, and $r'=\+R_{w'}(p_i')$. We first deal with four situations when $p_i'$ is lost in $w'$. For such positions we define $\mu(p_i)=p_i$. 

1. Conditions \eqref{lost1} and \eqref{lost2} above yield that if $\EE_w(p_i)<|w|$, then $p_i'$  is lost in $w'$.

2.  Let $\EE_w(p_i)=|w|$, $\DD_w(p_i)=\DD_{w'}(p_i')$ and $\EE_{w'}(p_i')<|w'|$. Since $p_i$ is lost, it satisfies \eqref{lost3} in $w$. Then \eqref{lost2} applies to $p_i'$ in $w'$.

3. Let $\EE_w(p_i)=|w|$, $\DD_w(p_i)=\DD_{w'}(p_i')$, $\EE_{w'}(p_i')=|w'|$ and $\ff_w(r)=\ff_{w'}(r')$. Then $p_i'$ satisfies \eqref{lost3} in $w'$.

4. Let $\EE_w(p_i)=|w|$ and $\DD_w(p_i)=1$. This case applies, informally stated, to inner positions of the last block of letters in $w$. It follows from definitions that inner positions of any block of letters are lost.  More formally, let $t$ be such that $w=u\overline c c^t$ or $w=c^t$ for a letter $c$. Since $p_i$ is lost, we deduce from \eqref{lost3} that $|w|-t+1<p_i<|w|$. (Observe that for $j=|w|-t+1$, we have $\ST_w(j)=j$.) We can either directly verify that the position $p_i'$ is lost in $w'$ or to note that we are in one of the previous cases.

The remaining cases are $\EE_w(p_i)=|w|$, $1\neq \DD_w(p_i)$, and either 
\begin{align*}
\DD_w(p_i)< \DD_{w'}(p_i'), \tag{$*$}\label{pos1}
\end{align*} 
or 
\begin{align*}
\begin{split}
&\DD_w(p_i)=\DD_{w'}(p_i'),\\ 
&\EE_{w'}(p_i')=|w'|, \quad \text{and} \\ 
&\ff_w(r)\neq \ff_{w'}(r').
\end{split}
\tag{$**$}
\label{pos2}
\end{align*}
Then we define $\mu(p_i)=p_i-\Delta$, where %$r=[\ST_w(i)..|w|]$ and
\[\Delta=r_w(\ff_w(r))-r_w({\overline {\ff_w(r)}}).\]
In other terms, $\Delta$ is the shift between the two arithmetic progression of starting positions of Lyndon roots of $w[\ST_w(i)..|w|]$ for the two lexicographic orders. We have $\Delta> 0$ and thus $\mu(p_i)<p_i$ (this motivates the way in which $\ff_w(r)$ was chosen). 
Since $p_i$ is lost in $w$, we deduce from \eqref{lost3} that $w[p_i..\LL_w(p_i)]$ is $\less_{\ff_w(r)}$-Lyndon, hence $w[p_i]=\ff_w(r)$. 
Thus $w[p_i-\Delta.. \LL_w(p_i-\Delta)]$ is $\less_{\overline{\ff_w(r)}}$-Lyndon. This implies that $\mu(p_i)$ is open in $w$ because it satisfies the condition \eqref{bb} and the injectivity of $\mu$ is not violated.
It remains to show that $j=|w_1|+\mu(p_i)$ is lost in $w'$. 

Consider first \eqref{pos1} and let $uc$, $c\in \Sigma$, be the shortest prefix of $w_2$ such that $w[p_i..|w|]uc$ has not the period $\DD_w(p_i)$. Since $\DD_w(p_i)\neq \DD_{w'}(p_i')$, Lemma \ref{broken2} applied to the word $w[p_i..|w|]uc$ implies that $c\neq {w[p_i]}$. Lemma \ref{broken2}, applied this time to the word $w[p_i-\Delta..|w|]uc$, implies $\DD_{w}(p_i-\Delta)=\DD_{w'}(j)$. Then $\EE_{w'}(j)=|w_1wu|<|w'|$, and $j$ is lost in $w'$ by the condition \eqref{lost2}.

The conditions \eqref{pos2} yield that $\DD_{w'}(j)=\DD_w(p_i-\Delta)$, and  $j$ is lost in $w'$ by \eqref{lost3} since $\less_{\ff_{w'}(r')}=\less_{\overline{\ff_w(r)}}$  and $w'[j..\LL_{w'}(j)]=w[p_i-\Delta..\LL_w(p_i-\Delta)]$ is $\less_{\overline{\ff_w(r)}}$-Lyndon.
Note that this case can happen if $\ST_w(p_i)=1$ and $\ST_{w'}(p_i')\leq |w_1|$.
\end{proof}

\subsection{Convergence}\label{conv} We are trying to give an upper bound on 
 \[\lim_{n\to\infty}\frac {\rho(n)}n\,.\]
The function $\rho(n)/n$ is not monotonically increasing, for example, $\rho(4)=\rho(5)=2$. It is therefore apriori not clear that the limit exists, and that we are not forced to use the upper limit instead. This difficulty is addressed in \cite{giraud} by a reference to the classical result on superadditive (or subadditive) sequences, known as Fekete's Lemma, which claims that if a sequence $(a_n)$ is superadditive, that is, if it satisfies $a_n+a_m\leq a_{n+m}$, then $\lim \frac {a_n}n$ exists (see \cite[p. 25]{fekete} for the discussion of the history of this result). 

As long as the alphabet size is not limited, the superadditivity of $\rho(n)$, that is, the inequality $\rho(m)+\rho(n)\leq \rho(m+n)$, is easily obtained in the following way. Let $x$ and $y$ be words of lengths $m$ and $n$ that contain $\rho(m)$ and $\rho(n)$ runs respectively. Then $xy$ contains exactly $\rho(m)+\rho(n)$ runs, given that the alphabets of $x$ and $y$ are disjoint. 

This simple argument of course cannot be used for a fixed alphabet. Purported remedy for a fixed alphabet is given by Proposition 5 of \cite{giraud}. There it seems to be claimed that the total number of runs in words $uw$ and $wv$, where $w$ is the longest word which is both suffix of $uw$ and a prefix of $wv$, is the same as the total number of runs in words $uwv$ and $w$. 
However, this is not true as the example $u={\tt bab}$, $w={\tt cabc}$ and $v={\tt abcb}$ shows. As a consequence, it seems to be an open question whether $\rho(n)$ is superadditive for a fixed alphabet.

In any case, the limit exists since a variant of Fekete's Lemma holds also for functions that satisfy a weaker form of the superadditivity condition (see \cite[p. 162, Theorem 23]{erdos} for details). Our case is a simple example of such a stronger form of Fekete's Lemma, and we shall prove it here from the ordinary version. 
Note that the simple concatenation of words fails to prove superadditivity of $\rho$ because two runs in $x$ and $y$ can be merged into a single one in $xy$, as it happens, in the above example,  to runs ${\tt abcabc}$ of $uw$ and ${\tt cabcabc}$ of $wv$.
However, there is a logarithmic upper bound on the number of runs that can be lost in this way. Each run that is a prefix of $y$ in particular yields a prefix square of $y$, and those squares have pairwise different primitive roots. By \cite[Theorem 8]{prefixsquares}, there is less than $\log_\phi|y|$ of such squares, where $\phi$ is the golden ratio.   Therefore \[\rho(n)+\rho(m) -\log_\phi n \leq \rho(m+n) \] for any $n\leq m$. Since $\log_\phi n \in o(n)$, this inequality is sufficient for the existence of the limit. This can be seen  by considering the modified sequence $\rho'(n):=\rho(n) - \log_\phi n -2$.
Then, for $n\leq m$, we obtain
\begin{align*}
\rho'(n)+\rho'(m) &=\rho(n)+\rho(m) - \log_\phi n - \log_\phi m -4 \leq \\
&\leq \rho(m+n) - \log_\phi m -4 
\\ &= \rho(m+n) - \log_\phi(2m) + \log_\phi 2-4
\\ &<  \rho(m+n) - \log_\phi(m+n) -2 =\rho'(m+n).
\end{align*}
Therefore $\rho'(n)$ is superadditive, and the convergence of $\rho'(n)/n$ establishes the convergence of $\rho(n)/n$.

\subsection{Justification of the algorithm}
We now introduce the main tool of Algorithm \ref{algor} used in Section \ref{sec3}, namely the predicate $P_d$ that captures frequency of lost positions in prefixes of a word. 
Let $d$ be a real number and let $w$ be a binary word. Let $p_1< p_2< \cdots < p_k$ be all lost positions of $w$. 
Then $P_d(w)$ if and only if  $p_i-1\geq jd$ for all $i=1,2,\dots,k$.
Informally, $P_d(w)$ means that, for all $1\leq i \leq k$, the average distance between the first $i$ lost positions (including the distance between the first lost position and the position one) in $w$ is at least $d$.

The next lemma expresses the fundamental property of $P_d(w)$ which allows to disprove $P_d(w)$ by checking prefixes of $w$.
\begin{lemma}\label{prefix}
If $P_d(w)$, then $P_d(v)$ for each prefix $v$ of $w$.
\end{lemma}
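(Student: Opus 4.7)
The plan is to invoke Lemma~\ref{lem} and essentially transport the lost-position lower bound from $w$ back down to its prefix $v$. Concretely, I would apply Lemma~\ref{lem} with the roles of $w$, $w_1$, $w_2$ there played by $v$, the empty word, and the suffix $u$ satisfying $w = vu$. Let $q_1 < q_2 < \cdots < q_m$ be the lost positions of $v$, and let $p_1 < p_2 < \cdots < p_k$ be the lost positions of $w$. Lemma~\ref{lem} then supplies a monotonically increasing injective map
\[ \mu : \{q_1,\ldots,q_m\} \to \{1,\ldots,|v|-1\} \]
such that each $\mu(q_i)$ is a lost position of $w$ and $\mu(q_i) \leq q_i$.

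The second step is to re-index. Since the image of $\mu$ lies in $\{p_1,\ldots,p_k\}$, I write $\mu(q_i) = p_{\sigma(i)}$ for a function $\sigma : \{1,\ldots,m\} \to \{1,\ldots,k\}$. The strict monotonicity and injectivity of $\mu$ immediately transfer to $\sigma$, and therefore $\sigma(i) \geq i$ for every $i$.

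The third step is the short computation. Combining $q_i \geq \mu(q_i) = p_{\sigma(i)}$ with the hypothesis $P_d(w)$, which reads $p_j - 1 \geq j d$ for every $j$, I obtain
\[ q_i - 1 \;\geq\; p_{\sigma(i)} - 1 \;\geq\; \sigma(i)\, d \;\geq\; i\, d, \]
which is exactly the condition $P_d(v)$.

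I do not anticipate any real obstacle, since all of the combinatorial work has already been carried out in Lemma~\ref{lem}. The only thing to check is purely bureaucratic: that Lemma~\ref{lem} is legitimately applicable with $w_1$ empty and $w_2 = u$ (it is), and that the image of $\mu$ is indeed contained in the set of lost positions of $w$ (this is exactly what the lemma asserts). Once those are acknowledged, the inequality chain above finishes the argument.
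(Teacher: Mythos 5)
Your proposal is correct and is exactly the argument the paper intends: the paper's proof of this lemma is the single line ``Follows directly from Lemma~\ref{lem},'' and your application of that lemma with $w_1$ empty and $w_2=u$, followed by the re-indexing $\mu(q_i)=p_{\sigma(i)}$ with $\sigma(i)\geq i$ and the chain $q_i-1\geq p_{\sigma(i)}-1\geq \sigma(i)d\geq id$, is precisely the omitted routine verification.
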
 
  \begin{proof}
	Follows directly from Lemma \ref{lem}.
	\end{proof}

For each $d\in \mathbb R$, we define $N_d\in \mathbb N\cup \{\infty\}$ by
\[N_d=\max\{|w| \ |\ P_d(w)\}.\]
The value of $N_d$ is finite if and only if there are only finitely many words satisfying $P_d$.

\begin{theorem}\label{thm}
Let $N_d$ be finite. Then $\lim \rho(n)/n< 1-1/d$.
\end{theorem}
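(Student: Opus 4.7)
The plan is to bound $\rho(n)$ from above via lost positions, iterate the failure of $P_d$ guaranteed by $N_d < \infty$, and then invoke the Fekete-type tool from Subsection~\ref{conv} to promote a non-strict limit bound to a strict one.

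First, let $\ell(w)$ denote the number of lost positions of $w$. From the position classification (each run corresponds bijectively to exactly one charged position, and every position except the first is either charged, right open, left open, or lost), one gets $\rho(w) \leq |w| - 1 - \ell(w)$, hence $\rho(n) \leq n - 1 - M(n)$ where $M(n) := \min_{|w|=n} \ell(w)$.

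Second, to lower-bound $M(n)$, I iterate. Given $w$ of length $n > N_d$, set $M_0 := 0$; while $n - M_{s-1} > N_d$, consider the suffix $w^{(s)} := w[M_{s-1}+1\,..\,n]$. Applying Lemma~\ref{prefix} to the length-$(N_d+1)$ prefix of $w^{(s)}$ (which cannot satisfy $P_d$, by definition of $N_d$) shows $P_d(w^{(s)})$ fails, so some index $i_s$ satisfies $p_{i_s}^{(s)} - 1 < i_s d$, where $p_1^{(s)} < p_2^{(s)} < \cdots$ are the lost positions of $w^{(s)}$. Set $m_s := p_{i_s}^{(s)}$, $\delta_s := i_s$, and $M_s := M_{s-1} + m_s$. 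By Lemma~\ref{lem}, the $\delta_s$ lost positions of $w^{(s)}$ lying in $[1,m_s]$ inject (via $\mu$) into lost positions of $w$ inside the interval $(M_{s-1}, M_s]$; across iterations these intervals are pairwise disjoint, so $\ell(w) \geq \sum_s \delta_s$. Summing the strict per-step inequalities $d\,\delta_s > m_s - 1$ together with $\sum_s m_s \geq n - N_d$ (from the stopping condition $|w^{(\mathrm{last})}| \leq N_d$), careful integer arithmetic yields $\ell(w) \geq (n - N_d)/d + O(1)$, and hence $\rho(n) \leq n(1 - 1/d) + O(1)$.

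Third, convert to the strict inequality via Subsection~\ref{conv}: the sequence $\rho'(n) := \rho(n) - \log_\phi n - 2$ is super-additive, so $\lim \rho(n)/n = \lim \rho'(n)/n = \sup_n \rho'(n)/n$. Combining with the bound from Step~2 yields
\[\frac{\rho'(n)}{n} \leq \Bigl(1 - \frac{1}{d}\Bigr) + \frac{O(1) - \log_\phi n - 2}{n},\]
which is strictly less than $1 - 1/d$ for all $n$ past a fixed threshold, since $\log_\phi n$ eventually dominates the absolute constant. For $n$ below the threshold, the trivial bound $\rho'(n) \leq n - 3 - \log_\phi n$ keeps $\rho'(n)/n$ strictly below $1 - 1/d$ as well. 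Taking the supremum over $n$ then gives $\lim \rho(n)/n < 1 - 1/d$, as required.

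The main obstacle will be the integer-arithmetic step for arbitrary real $d$: controlling the accumulated loss so that it remains $O(1)$ rather than growing linearly in the number of iterations. For integer $d$ the per-step bound $d\,\delta_s \geq m_s$ is immediate from strictness and integrality; for general $d$ one must track fractional parts of $(m_s-1)/d$ carefully, possibly absorbing a sub-linear slack into the $\log_\phi n$ correction of Subsection~\ref{conv} so that the eventual strict separation from $1 - 1/d$ survives.
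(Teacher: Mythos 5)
Your overall architecture --- iterate the failure of $P_d$ down the word, convert lost-position density into a run bound via $\rho(w)\leq|w|-1-\ell(w)$ --- is essentially the paper's, but two steps have genuine gaps, and the second is fatal to the strict inequality. On the iteration: you restart the next suffix at $M_{s-1}+m_s+1$, one position \emph{past} the chosen lost position, so a block of length $m_s$ carries only $\delta_s>(m_s-1)/d$ lost positions. That one lost unit per block is not $O(1)$ in total: the number of blocks can be linear in $n$, and since $S\leq\sum_s\delta_s$ the summation only yields $\ell(w)>(\sum_s m_s)/(d+1)$, i.e.\ density $1/(d+1)$ rather than $1/d$. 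Concretely, for non-integer $d$ a block with $\delta_s=1$ may have $m_s=\lceil d\rceil$ (for $d=19.3$, $m_s=20$ since $19<19.3$), with density $1/20<1/19.3$. You flag this as ``the main obstacle,'' but the fix is not fractional-part bookkeeping: it is to make consecutive blocks \emph{overlap} at the chosen lost position, taking the next word to be $w[p_j..n]$ so that $p_j$ becomes position $1$ of the new word. Then the $i$-th block has length exactly $p_j-1$ (the quantity appearing in $P_d$), the blocks tile $w$ minus its first letter and a tail of length at most $N_d$, and each block has density $j/(p_j-1)>1/d$ with no loss. This is precisely the paper's factorization $w=a_0w_1a_1\cdots w_ma_mw_{m+1}$.

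More seriously, even with the corrected tiling you only get $\rho(n)/n<1-1/d+C/n$, which gives $\lim\rho(n)/n\leq 1-1/d$ and no more, and your route to strictness via $\rho'(n)=\rho(n)-\log_\phi n-2$ cannot repair this. Fekete gives $\lim=\sup_n\rho'(n)/n$, and a supremum of terms each strictly below $1-1/d$ can still \emph{equal} $1-1/d$; your bound is $1-1/d-(\log_\phi n+2-C)/n$ with the subtracted term tending to $0$, so there is no uniform gap. (Your two regimes also fail to cover all $n$: the Step~2 bound beats $1-1/d$ only for $n>\phi^{\,C-2}$, astronomically large when $C\approx N_d/d$, while the trivial bound does so only for $n\lesssim d\log_\phi n+3d$.) The missing idea is the paper's $\varepsilon_d$ device: the witnessing pair $(p_j,j)$ of the failure of $P_d$ satisfies $j\leq p_j\leq N_d$, so the ratio $(p_j-1)/j$ ranges over a \emph{finite} set of values all $<d$, whence $p_j-1\leq j(d-\varepsilon_d)$ for a fixed $\varepsilon_d>0$ depending only on $d$. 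This makes every block's density at least $1/(d-\varepsilon_d)$ uniformly, giving $\lim\rho(n)/n\leq 1-1/(d-\varepsilon_d)<1-1/d$. Without this uniformization the strict inequality of the theorem is out of reach.
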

\begin{proof}
Let $w$ be a word not satisfying $P_d(w)$, and let $j\geq 1$ be the smallest integer such that $p_j-1 < jd$, where $1 < p_j$ is the $j$th lost position of $w$. Suppose that $p_j>N_d$, and let $v$ be the prefix of $w$ of length $N_d+1$. Then $P_d(v)$ by Lemma \ref{lem}. Since this contradicts the definition of $N_d$, we have proved that $p_j \leq N_d$. 
 
Since $p_j, j\leq N_d$, there are only finitely many possible values of $(p_j-1)/j$. This implies that $p_j-1\leq j(d - \varepsilon_d)$, for some $\varepsilon_d>0$ which depends only on $d$, not on $w$.

Let $w=a_0w_1a_1w_2$, where $a_0$ and $a_1$ are letters and $\abs{a_0w_0a_1}=p_j$. Repeating the same argument for $a_1w_2$, and using Lemma \ref{lem}, we obtain inductively  
 a factorization 
\[
w=a_0w_1a_1w_2a_2\cdots w_{m}a_m w_{m+1}
\]
such that $a_i$, $i=0,1,\dots,m$, are letters, and for each $i=1,2,\dots,m$, we have that 
\begin{itemize}
	\item $\abs{a_{i-1}w_i}< N_d$,
	\item $\abs{a_0w_1a_1\cdots a_{i-1}w_ia_i}$ is a lost position of $w$,
	\item there are at least $\left(\abs{a_{i-1}w_ia_i}-1\right)/\left(d-\varepsilon_d\right)$ lost positions $p$ of $w$ satisfying
	\[\abs{a_0w_1a_1\cdots a_{i-2}w_{i-1}a_{i-1}}<p \leq \abs{a_0w_1a_1\cdots a_{i-1}w_ia_i},\]
	\item $\abs{a_m w_{m+1}}\leq N_d$.
\end{itemize}

The above factorization is defined for $\neg P_d(w)$. However, such a factorization with 
$m=0$ and $w=a_0w_1$ exists also if $P_d(w)$.

Suppose now, contrary to the claim, that $\lim \rho(n)/n \geq 1-1/d$. Then there are infinitely many words $w$
such that $\rho(w)/|w|>1-1/(d-\varepsilon_d/2)$, where $\rho(w)$, by a slight abuse of notation, denotes the number of runs in $w$. The above factorization implies that $w$ contains at least 
	\[
	\sum_{j=1}^m \frac {\abs{a_{j-1}w_ja_j}-1}{d-\varepsilon_d} \geq \frac{\abs{w}-N_d}{d-\varepsilon_d}
	\]
	lost positions, which implies
	\[
	\frac{\rho(w)}{\abs w} \leq 1 - \frac 1 {\abs w}\cdot\frac{\abs{w}-N_d}{d-\varepsilon_d}\,.
	\]
	Therefore
	\[ 1- \frac 1{d-\varepsilon_d/2} <  1 - \frac 1 {d-\varepsilon_d} \cdot\frac{\abs{w}-N_d}{\abs w}\,.\]
	This inequality imposes an upper bound on the length of $w$, which is a contradiction.
\end{proof}

\section{Computation}\label{sec3}
Theorem \ref{thm} allows to compute an upper bound on $\rho(n)/n$ by a simple search described by Algorithm \ref{alg}.  By symmetry, we consider words starting with $\0$ only. 
The search space is reduced by the fact that $\neg P_d(w)$ allows to cut off all words starting with $w$ by Lemma \ref{prefix}. This is a significant help since the search performed in \cite{beyond} is burdened by suffixes which contain many open positions.

The search space can be seen as a binary tree where leaves are prefix-minimal words violating $P_d$. Algorithm \ref{alg} looks through the tree in lexicographic order starting with the word $\0$.  For example,  the node $\0\1$ has number 381\,978\,887\,301 if $d=19.3$. The node following lexicographically a leaf $w$, that is, the lexicographically next word prefix-incomparable with $w$, is $w'\,\1$ where $w=w'\0\1^i$ for some $i\geq 0$.
We illustrate how the computation ends. Let $d>2$ be such that $N_d$ is finite (in fact, then $d>18$, see below). Since $\0\1\1\0$ does not contain any lost position, $P_d(\0\1\1\0)$ trivially holds.
Let $\0\1\1\0 u$ be the lexicographically maximal word starting with $\0\1\1\0$ and satisfying $P_d$. Then the last three words checked by the algorithm are $\0\1\1\0 u \0$, $\0\1\1\0 u\1$ and $\0\1\1\1$, none of them satisfying $P_d$: we have $\neg P_d(\0\1\1\0 uc)$ by assumption on $\0\1\1\0 u$, and $\neg P_d(\0\1\1\1)$ because the position 3 of $\0\1\1\1$ is lost. The next word would be $\1$, but the algorithm considers only words starting with $\0$.
 
Not surprisingly, the size of the tree grows very quickly when $d$ gets smaller as shown in Table \ref{tab}, which summarizes results of the computation.

\begin{table}[!htb]
  \caption[]{Search results}
  \label{tab}
   \centering{
    \begin{tabular}{|r|r|r|}
		\hline
		& & \\[-2.3ex]
      $d$ \hspace{0.3ex} & $N_d$ \hspace{0.3ex} & size of the tree \hspace{0.2ex} \\ \hline
							& & \\[-1.5ex]
					$24$ & 308 & $24\, 279\, 243$ \\[0.1ex]		
					$23$ & 342 & $76\, 363\, 113$ \\[0.1ex]
					$22$ & 398 & $347\, 983\, 507$ \\[0.1ex]
					$21$ & 501 & $2\, 707\, 920\, 449$ \\[0.1ex]
					$20$ & 701 & $51\, 127\, 033\, 629$ \\[0.1ex]
					$19.8$ & 755 & $96\, 211\, 433\, 401$ \\[0.1ex]
					$19.7$ & 790 & $142\, 036\, 768\, 311$ \\[0.1ex]
					$19.5$ & 900 & $375\, 398\, 516\, 621$ \\[0.1ex]
					$19.4$ & 952 & $576\, 073\, 931\, 783$\\[0.1ex]
					$19.3$ & 1025 & $1\, 010\, 811\, 174\, 607$\\
				\hline	
	    \end{tabular}
   }
\end{table}

For $d=19.3$, we have obtained $N(19.3)=1025$, which implies
\[\lim \frac{\rho(n)}n < \frac {183}{193} \approx 0.9481865\dots \]  

The best lower bound from \cite{jamie} corresponds to $d=18.04263\dots$. Therefore, it remains to close the gap between $19.3$ and $18.04$.
It is likely that the actual value is very close to the lower bound. One indication of this is the analysis of the longest word $w_{\tt MAX}$ satisfying $P_{19.3}$ (see Table \ref{word}). Note that $w_{\tt MAX}[32..34]=\0\0\0$, which is the unique occurrence of $\0\0\0$ in $w_{\tt MAX}$. This means that the position 33 is lost and the position 32 is open. One can expect that the position 32 cannot be reasonably charged, that is, without introducing many lost positions. If this is true, then the word  $w_{\tt MAX}$ is actually not very good with respect to the prefix density of lost positions: if the position 32 gets lost, then the average distance of the two first lost positions is $16$. Therefore, instead of witnessing possibility of words with greater number of runs, it rather seems that the word $w_{\tt MAX}$ shows the need to improve the present method.   

\begin{table}[!htb]
  \caption[]{The longest word $w_{\tt MAX}$ satisfying $P_{19.3}$}
  \label{word}
   \centering{
    \begin{tabular}{l}
\0\0\1\0\1\0\0\1\0\1\1\0\1\0\0\1\0\1\0\0\1\0\1\1\0\1\0\0\1\0\1\0\0\0\1\0\0\1\0\1\0\0\1\0\1\1\0\1\0\0\\
\1\0\1\0\0\1\0\1\1\0\0\1\0\1\0\0\1\0\1\1\0\1\0\0\1\0\1\0\0\1\0\1\1\0\1\0\0\1\0\1\0\0\1\0\1\1\0\0\1\0\\
\1\0\0\1\0\1\1\0\1\0\0\1\0\1\0\0\1\0\1\1\0\1\0\0\1\0\1\1\0\0\1\0\1\0\0\1\0\1\1\0\1\0\0\1\0\1\0\0\1\0\\
\1\1\0\0\1\0\1\0\0\1\0\1\1\0\1\0\0\1\0\1\0\0\1\0\1\1\0\1\0\0\1\0\1\1\0\1\0\0\1\0\1\0\0\1\0\1\1\0\1\0\\
\0\1\0\1\1\0\1\0\1\1\0\1\0\0\1\0\1\1\0\1\0\0\1\0\1\0\0\1\0\1\1\0\1\0\0\1\0\1\1\0\1\0\0\1\0\1\0\0\1\0\\
\1\1\0\1\0\0\1\0\1\1\0\1\0\1\1\0\1\0\0\1\0\1\1\0\1\0\0\1\0\1\1\0\1\0\1\1\0\1\0\0\1\0\1\1\0\1\0\0\1\0\\
\1\0\0\1\0\1\1\0\1\0\0\1\0\1\1\0\1\0\1\1\0\1\0\0\1\0\1\1\0\1\0\0\1\0\1\1\0\1\0\1\1\0\1\0\0\1\0\1\1\0\\
\1\0\1\1\0\0\1\0\1\1\0\1\0\1\1\0\1\0\0\1\0\1\1\0\1\0\1\1\0\0\1\0\1\1\0\1\0\0\1\0\1\1\0\1\0\1\1\0\1\0\\
\0\1\0\1\1\0\1\0\1\1\0\0\1\0\1\1\0\1\0\1\1\0\1\0\0\1\0\1\1\0\1\0\1\1\0\1\0\0\1\0\1\1\0\1\0\1\1\0\0\1\\
\0\1\1\0\1\0\0\1\0\1\1\0\1\0\1\1\0\1\0\0\1\0\1\1\0\1\0\1\1\0\0\1\0\1\1\0\1\0\1\1\0\1\0\0\1\0\1\1\0\1\\
\0\1\1\0\0\1\0\1\1\0\1\0\0\1\0\1\1\0\1\0\1\1\0\1\0\0\1\0\1\1\0\1\0\1\1\0\0\1\0\1\1\0\1\0\1\1\0\1\0\0\\
\1\0\1\1\0\1\0\1\1\0\1\0\0\1\0\1\1\0\1\0\1\1\0\0\1\0\1\1\0\1\0\1\1\0\1\0\0\1\0\1\1\0\1\0\1\1\0\0\1\0\\
\1\1\0\1\0\0\1\0\1\1\0\1\0\1\1\0\1\0\0\1\0\1\1\0\1\0\1\1\0\0\1\0\1\1\0\1\0\1\1\0\1\0\0\1\0\1\1\0\1\0\\
\1\1\0\1\0\0\1\0\1\1\0\1\0\1\1\0\0\1\0\1\1\0\1\0\0\1\0\1\1\0\1\0\1\1\0\1\0\0\1\0\1\1\0\1\0\1\1\0\0\1\\
\0\1\1\0\1\0\1\1\0\1\0\0\1\0\1\1\0\1\0\1\1\0\0\1\0\1\1\0\1\0\0\1\0\1\1\0\1\0\1\1\0\1\0\0\1\0\1\1\0\1\\
\0\1\1\0\0\1\0\1\1\0\1\0\1\1\0\1\0\0\1\0\1\1\0\1\0\1\1\0\1\0\0\1\0\1\1\0\1\0\1\1\0\0\1\0\1\1\0\1\0\1\\
\1\0\1\0\0\1\0\1\1\0\1\0\1\1\0\0\1\0\1\1\0\1\0\0\1\0\1\1\0\1\0\1\1\0\1\0\0\1\0\1\1\0\1\0\1\1\0\0\1\0\\
\1\1\0\1\0\1\1\0\1\0\0\1\0\1\1\0\1\0\1\1\0\1\1\0\1\0\1\1\0\1\0\0\1\0\1\1\0\1\0\1\1\0\1\0\0\1\0\1\1\0\\
\1\0\1\1\0\1\1\0\1\0\1\1\0\1\0\0\1\0\1\1\0\1\0\1\1\0\1\0\0\1\1\0\1\0\1\1\0\1\0\0\1\0\1\1\0\1\0\1\1\0\\
\1\0\0\1\1\0\1\0\1\1\0\1\0\1\1\0\1\0\0\1\1\0\1\0\1\1\0\1\0\0\1\0\1\1\0\1\0\1\1\0\1\0\0\1\1\0\1\0\1\1\\
\0\1\0\0\1\0\1\1\0\1\0\1\1\0\1\0\0\1\1\0\1\0\1\1\0
	    \end{tabular}
   }
\end{table}

\begin{algorithm2e}[t]\label{algor}
  \caption{Find $N_d$}
  \label{alg}
    \KwIn{A number $d$.}
    \KwOut{$N_d$.}
$N_d =1$\;
$w\leftarrow \0$\;
\While{$w\neq\1$}
{find $p_1<p_2<\cdots p_k$ \tcp{all lost positions of $w$}
\If{$p_i-1\geq i\cdot d$ for all $i=1,2,\dots,k$}
{$N_d \leftarrow \max(|w|,N_d)$\;
	$w \leftarrow w\,\0$
 }
\Else{$w \leftarrow w'\1$ \tcp{where $w\in w'\0\1^*$}} 
}
\Return $N_d$
\end{algorithm2e}

\subsection{Some remarks on the implementation} Algorithm \ref{alg} was implemented in C++. The nontrivial part is finding all lost positions. The implementation of the search for $d=19.3$ stored arrays of positions $i$ with $\EE_v(i)=|v|$ for each prefix $v$ of the actual $w$. Then it was enough to recompute needed values like $\LL(i)$, $\ST(i)$ and $\EE(i)$ just for stored positions of the word being extended ($w$ or $w'$). This space consuming approach allows to speed up the search about two times. The resulting performance for $d=19.3$ was on average roughly $4.7\cdot 10^5$ nodes of the search tree per second on i5-3330 3.00GHz RAM 4GB. The bound $22/23$ of \cite{beyond} is obtained in less then 2 minutes.

\section{Final remarks and further research}
We consider only binary words in this paper. The theory of Section \ref{theory} could be adapted for a general alphabet. However, the computer search, yielding the bound which represents the main result of this paper, is done for the binary alphabet anyway. 
A better justification for the choice would be the claim that the maximum density of runs is achieved by binary words. A result of this kind was recently  obtained for a related problem of the density of distinct squares (see \cite{Manea_2015}).  In view of the presented theory, such a claim sounds reasonable also for runs because more letters imply more lexicographic orders and more Lyndon roots, thus we can expect  more lost positions. However, surprisingly, there is no such result available in the literature so far and we do not see any obvious way how to convert the intuitive argument into a formal one. For sake of simplicity, we therefore assume the binary alphabet throughout the paper and point out this problem as an interesting topic for further research. 

This paper is concerned with the asymptotic behavior of $\rho(n)/n$ and does not address questions about its progress. In particular, we cannot give a definitive answer to the question whether there exists a word $w$ with more than $\frac {183}{193}|w|$ runs. 
However, we conjecture that the asymptotic upper bound is never reached. This conjecture is based on the fact that any finite word contains some open positions (for example, the first position of the last block of letters) which further decreases the number of charged positions, thus the number of runs. 
The question is related to our discussion in Section \ref{conv}. The answer would be trivial, were the function $\rho(n)$ superadditive after all, because then $\rho(k|w|)\geq k\rho(|w|)$ would contradict the asymptotic bound.  

\section{Acknowledgments}
I thank Tomohiro I, who provided me with an implementation of the search used in \cite{beyond}, Petr Pit\v rinec who helped me to write the first version of the code, and Milan Boh\'a\v cek who gave me some valuable advise on how to make the computation fast (including not to be afraid of using global variables). 

I am also grateful to referees who helped to improve the presentation and asked 
pertinent questions leading to the section on further research.

\end{document}